\newtheorem{theorem}{Theorem}
\newtheorem{lemma}[theorem]{Lemma}
\newtheorem{prep}[theorem]{Proposition}
\newcommand{\qed}{\hfill $\blacksquare$}
\newtheorem{defn}{Definition}
\newcommand{\ip}[2]{\langle #1|#2 \rangle}
\begin{document}

\title{Quantum Control via Geometry: An explicit example}

\author{Mile Gu}
\address {Department of Physics, University of Queensland, St
Lucia, Queensland 4072, Australia.}

\author{Andrew Doherty}
\address {Department of Physics, University of Queensland, St
Lucia, Queensland 4072, Australia.}

\author{Michael A. Nielsen}
\address {Department of Physics, University of Queensland, St
Lucia, Queensland 4072, Australia.}
\address{Perimeter Institute for Theoretical Physics, Waterloo, ON N2L 2Y5, Canada.}

\date{\today}

\begin{abstract}We explicitly compute the optimal cost for a class of example
problems in geometric quantum control. These problems are defined by a
Cartan decomposition of $su(2^n)$ into orthogonal subspaces $\mathfrak{l}$ and $\mathfrak{p}$ such
that $[\mathfrak{l},\mathfrak{l}] \subseteq \mathfrak{p}, [\mathfrak{p},\mathfrak{l}] = \mathfrak{p}, [\mathfrak{p},\mathfrak{p}] \subseteq \mathfrak{l}$. Motion in the $\mathfrak{l}$ direction are assumed to have negligible cost, where motion in the $\mathfrak{p}$ direction do not. In the special case of two qubits, our results correspond to the minimal interaction cost of a given unitary.
\end{abstract}

\pacs{03.67.Dd, 42.50.Dv, 89.70.+c}

\maketitle

\section{Introduction}

Characterizing the difficulty of synthesizing particular quantum interactions has generated considerable interest in recent years due to its practical applications in quantum computation. From the perspective of optimal control, it determines the optimal way to construct a desired quantum interaction with a limited set of tools \cite{Ernst90a,Glaser98a,Warren97a,Carlini05a,Carlini06a}. From the perspective of quantum circuits, it expresses the minimal number of basic gates required to build up a given algorithm \cite{Nielsen00a}.

These perspectives result in different characterizations of complexity. In optimal control, a unitary is hard if it is costly to synthesize with available interactions. In quantum circuits, a hard unitary requires a large number of the basic available gates. Recent work by Nielsen et al. shows that for certain control problems, both characterizations are polynomially equivalent\footnote{The are some technical caveats to this equivalence related to approximate
versus exact implementation.  See \cite{Nielsen06d} for details} \cite{Nielsen06d}.

This equivalence motivates the application of continuous geometrical methods to quantum circuits, and in the special case where the basic gates are single and two qubit unitaries, quantum complexity  \cite{Nielsen06b,Nielsen06c}. In this formulation, each unitary operator corresponds to a point in a particular Riemmannian manifold. The metric is engineered such that the minimal distance between a unitary operator $U$ and the identity $I$ corresponds to the minimal cost of synthesizing $U$. This approach allows us to apply mathematical techniques cultivated over many decades to a significantly newer field.

Prior work in quantum optimal control has mostly dealt with systems that evolve under a specific drift Hamiltonian (See for example, \cite{Bennet02a,Khaneja01a,Childs03d,Boscain05a}). However, all entangling operations are equivalent modulo local interactions \cite{Dodd02a,Wocjan02c,Dur00b,Bennet02a,Jones99a} and hence no particular operation should be favored in a model compliant with the spirit of quantum complexity. This motivates the treatment of interaction Hamiltonians as a physical resource, where they are all assigned equal cost.


In this paper, we consider a class of quantum control problems where the space of Hamiltonians is divided into two orthogonal subspaces, the application of Hamiltonians in one subspace incurs negligible cost compared to the other. Provided these subspaces satisfy the conditions of a Cartan decomposition (see below), geometrical methods may be used to construct a general solution.

While the general class of systems solved in this paper has not been analyzed in previous literature, it encompasses a number of previously studied systems. In the special case of a single qubit, our result provides an alternative characterization of single qubit time-optimal control \cite{Khaneja01a}. In the case of $2$-qubits, our solution coincides with the interaction cost of a two qubit unitary \cite{Vidal02c}, minimized over all possible drift Hamiltonians.






\section{Background and Definitions}
In this section, we introduce some of the necessary background and notation that will be used in the paper. We assume the reader is familiar with the basic notions of Riemannian geometry, Lie algebras and quantum circuits (E.g ~\cite{Lee97a,Georgi99a,Nielsen00a}).

Consider an $n$-qubit system. The space of traceless Hamiltonians $H \in su(2^n)$ on this system forms a vector space under the trace inner product $\ip{A}{B} = \mathrm{tr}(AB)$. This space is spanned by the product operator basis $\prod_{j=1}^n \sigma_{j,k}$, where $\sigma_{j,k}$ denotes the action of applying the Pauli interaction $\sigma_{j} \in \{I,\sigma_x,\sigma_y,\sigma_z\}$ to the $k^{th}$ qubit.

The quantum control problem is defined as follows. We wish to synthesize a given $n$-qubit unitary $U \in SU(2^n)$ by the application of some Hamiltonian $H(t) \in su(2^n)$. We define a cost function $C: SU(2^n) \times su(2^n) \rightarrow \mathbb{R}$ such that the application of $H$ for duration $dt$ on a unitary $U $ incurs cost $C(U, H) dt$. Formally, the system is governed by the Schr\"{o}dinger equation
\begin{equation}\label{eqn:evo}
\frac{dU}{dt} =  - i H(t) U(t) \qquad U(0) = I \qquad U(T) = U.
\end{equation}
We aim to find the $H(t)$ on interval $[0,T]$ such that the total cost $D(I,U) = \int_0^T C(U, H) dt$ is minimized.

In this paper, we analyze such problems using the geometrical approach \cite{Nielsen06d}. Each unitary $U \in SU(2^n)$ corresponds to a point in the Riemannian manifold $\mathcal{N} = SU(2^n)$, and each Hamiltonian describes a vector in $T\mathcal{N}$, the tangent space of $\mathcal{N}$.  Distances on $\mathcal{N}$ are defined by $C: \mathcal{N} \times T\mathcal{N} \rightarrow \mathbb{R}$. The minimal cost $D(I,U)$ coincides with the minimal distance between $I$ and $U$.

We focus on a class of quantum control problems that split the space of Hamiltonians $su(2^n)$ into two vector subspaces, $\mathfrak{l}$ and $\mathfrak{p}$. Hamiltonians in $\mathfrak{l}$ have negligible cost, while those in $\mathfrak{p}$ do not. In addition, $\mathfrak{l}$ and $\mathfrak{p}$ satisfy the set of commutation relations that define a Cartan decomposition \cite{Khaneja01b}:
\begin{equation}\label{eqn:com_relation}
[\mathfrak{l},\mathfrak{l}] \subseteq \mathfrak{l} \qquad [\mathfrak{p},\mathfrak{l}] = \mathfrak{p}, \qquad [\mathfrak{p},\mathfrak{p}] \subseteq \mathfrak{l}.
\end{equation}
We refer to such problems as Cartan control problems.

\begin{defn}[Cartan control problem]
A Cartan control problem on an $n$-qubit system is defined as follows. Let $\mathfrak{l}$ and $\mathfrak{p}$ be subspaces of $su(2^n)$ that satisfy (\ref{eqn:com_relation}). Define $P_\mathfrak{l}$ and $P_\mathfrak{p}$ as their respective projection operators. The application of a Hamiltonian $H$ for time $dt$ incurs cost $C(U,H) = \sqrt{\ip{H}{\mathcal{\tilde{G}}H}}dt$, where
\begin{equation}
\mathcal{\tilde{G}} = \epsilon P_\mathfrak{l} +  P_\mathfrak{p}, \qquad \epsilon \ll 1.
\end{equation}
Given an unitary $U \in SU(2^n)$, we wish to find $H(t)$ on $[0,T]$ that minimizes $D(I,U) = \int_0^T C(U(T), H(t)) dt$ subject to (\ref{eqn:evo}). Alternatively. this problem can be regarded as computing the distance between $I$ and $U$ on the manifold $N = SU(2^n)$ subject to the metric $C$.
\end{defn}

The $2$-qubit system, where we wish synthesize $U \in SU(2^n)$ with the minimal amount of non-local interactions, is a special case of this problem. Here, $\mathfrak{l}$ is the vector space of single-qubit Hamiltonians and $\mathfrak{p}$ is the vector space of all directions orthogonal to $\mathfrak{l}$. The resulting Cartan control problem neglects the cost of all single-qubit interactions, and thus $D(I,U)$ is a measure of the minimal amount of interactions required to synthesize $U$. In fact, it coincides with the interaction cost of $U$ \cite{Vidal02c}, when minimized over all possible drift Hamiltonians.

The physical interpretation of $n$-qubit Cartan control problems for $n > 2$ is not as transparent. Although there exists a decomposition such that all single-qubit interactions are contained in $\mathfrak{l}$, $\mathfrak{l}$ will invariably also contain interactions involving an unbounded number of qubits. Therefore, the condition $\epsilon \rightarrow 0$ implies that in addition to local interactions, certain non-local interactions can also be applied at negligible cost. Although the solution for these cases does not have a direct physical application, other than provide a lower bound on complexity, it shows how geometrical methods are well adapted to solving a general class of problems that scale with $n$.

\section{Solution to the Cartan Control Problem}
In this section, we solve the Cartan control problem using geometrical methods \cite{Nielsen06d}. Before approaching the problem directly, we illustrate the intuition behind our approach by a simple example.

Consider a cylindrical surface of unit radius $N = \mathbb{R}\times[0,2\pi)$ parameterized by standard cylindrical coordinates, $z$ and $\theta$ and the naturally induced metric $C_N(z, \theta, dz,d\theta) = \sqrt{dz^2 + d\theta^2}$. Suppose we wish to find the minimal distance between two points, $\mathbf{x} = (0,0)$ and $\mathbf{y} = (0,\pi/2)$, it is clear that geodesics between the two points are non-unique since the surface wraps around itself. We circumvent this difficulty by introducing a second manifold $\mathcal{M} = \mathbb{R}^2$ with the standard Euclidean metric $C_M(p,q,dp,dq) = \sqrt{dp^2 + dq^2}$, together with a mapping $U: \mathcal{M} \rightarrow \mathcal{N}$ of the form $U(p,q) = (p, q\mod 2\pi)$. If we define $[\mathbf{x}]$ and $[\mathbf{y}]$ as the pre-image of $\mathbf{x}$ and $\mathbf{y}$ with respect to $U$. i.e: $[\mathbf{x}] = \{0,2j\pi\}$, $[\mathbf{y}] = \{0,(2k+\frac{1}{2})\pi)\}$ $j,k \in \mathbb{Z}$, then the distance $d_N(\mathbf{x},\mathbf{y})$ on $\mathcal{N}$ coincides with the minimal distance between the sets $[\mathbf{x}]$ and $[\mathbf{y}]$ on $\mathcal{M}$, i.e: $\pi/2$. The following lemma states this more generally:

\begin{lemma}\label{lem:distance_equiv}Let $\mathcal{M}$ and $\mathcal{N}$ be Riemannian manifolds with distance measures $C_M$ and $C_N$. Denote the distance between two points on $\mathcal{M}$ and $\mathcal{N}$ by $d_M(\cdot,\cdot)$ and $d_N(\cdot,\cdot)$ respectively.

Let $U: \mathcal{M}\rightarrow \mathcal{N}$ be a smooth map that preserves the distance, i.e: $C_M(q,\mathbf{v}) = C_N(U(q),U^*(\mathbf{v}))$, where $U^*$ is the pushforward of $U$. Define $[x] = \{\mathbf{p}: \phi(\mathbf{p}) = \mathbf{x}\}$, $[y] = \{\mathbf{q}: \phi(\mathbf{q}) = \mathbf{y}\}$ as the pre-image of $x,y \in \mathcal{N}$ respectively. If $\mathbf{p} \in [x]$, and $\mathbf{q} \in [y]$, then
\begin{align}\nonumber
d_M\left([x],[y]\right) & \equiv \min_{\mathbf{q} \in [y]} d_M(\mathbf{p},\mathbf{q})\\& = \min_{\mathbf{p} \in [x]} d_M(\mathbf{p},\mathbf{q})= d_N(\mathbf{x},\mathbf{y}),
\end{align}
where $d_M(\mathbf{p},\mathbf{q})$ denotes the distance between $\mathbf{p}$ and $\mathbf{q}$.
\end{lemma}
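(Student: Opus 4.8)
The plan is to prove the two inequalities $d_M([x],[y]) \le d_N(\mathbf{x},\mathbf{y})$ and $d_N(\mathbf{x},\mathbf{y}) \le d_M([x],[y])$ separately, using the fact that $U$ is a local isometry in the sense that it preserves the norm of tangent vectors. Throughout I would work with the characterization of Riemannian distance as an infimum of lengths of piecewise-smooth curves, $d(a,b) = \inf_\gamma \int_0^1 C(\gamma(t),\dot\gamma(t))\,dt$ over curves $\gamma$ with $\gamma(0)=a$, $\gamma(1)=b$; the length functional is what the distance-preserving hypothesis on $U$ interacts with directly.

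For the direction $d_N(\mathbf{x},\mathbf{y}) \le d_M([x],[y])$: fix any $\mathbf{p}\in[x]$ and $\mathbf{q}\in[y]$ and any curve $\gamma$ in $\mathcal{M}$ from $\mathbf{p}$ to $\mathbf{q}$. Then $U\circ\gamma$ is a curve in $\mathcal{N}$ from $\mathbf{x}$ to $\mathbf{y}$, and by the chain rule its velocity is $U^*(\dot\gamma(t))$, so by the hypothesis $C_N((U\circ\gamma)(t), \widehat{(U\circ\gamma)}(t)) = C_M(\gamma(t),\dot\gamma(t))$ pointwise; integrating, $\mathrm{length}_N(U\circ\gamma) = \mathrm{length}_M(\gamma)$. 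Taking the infimum over $\gamma$ gives $d_N(\mathbf{x},\mathbf{y}) \le d_M(\mathbf{p},\mathbf{q})$, and then the infimum over $\mathbf{p},\mathbf{q}$ gives the claim. This direction is essentially formal and uses only that $U$ is a smooth distance-preserving map; it does not need any covering-space-like surjectivity property.

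The reverse direction $d_M([x],[y]) \le d_N(\mathbf{x},\mathbf{y})$ is the substantive one and is where the main obstacle lies: given a near-optimal curve $\delta$ in $\mathcal{N}$ from $\mathbf{x}$ to $\mathbf{y}$, I must \emph{lift} it to a curve $\tilde\delta$ in $\mathcal{M}$ starting at a chosen $\mathbf{p}\in[x]$, so that $U\circ\tilde\delta = \delta$; then the forward argument run in reverse gives $\mathrm{length}_M(\tilde\delta) = \mathrm{length}_N(\delta)$, the endpoint $\tilde\delta(1)$ lies in $[y]$ since $U(\tilde\delta(1)) = \delta(1) = \mathbf{y}$, and we conclude $d_M(\mathbf{p},[y]) \le \mathrm{length}_N(\delta)$, hence $d_M([x],[y]) \le d_N(\mathbf{x},\mathbf{y})$ after taking infima. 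The existence of such a lift is exactly the path-lifting property; since $U$ is a distance-preserving (in particular immersive, indeed locally isometric) smooth surjection between manifolds of equal dimension, it behaves locally like a covering map, and one expects path lifting at least when $\mathcal{M}$ is large enough to contain the full preimages — which is precisely the setting the lemma is applied in (the universal cover $SU(2^n)$ pulled back along a map to the quotient geometry). I would either invoke the standard covering-space path-lifting theorem after checking $U$ is a covering map, or, more carefully, argue directly: cover $\delta([0,1])$ by finitely many evenly-covered neighborhoods using compactness, and lift piece by piece, at each stage using that the local inverse of $U$ is a genuine isometry onto its image so no length is lost. I would flag that the honest verification of the covering/lifting hypothesis for the specific $U$ used later is the one nontrivial point, and that everything else is bookkeeping with the length functional.
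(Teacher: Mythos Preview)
Your overall architecture matches the paper exactly: prove the two inequalities separately, push curves forward through $U$ for $d_N \le d_M$, and lift curves backward for $d_M \le d_N$. The forward direction is identical to the paper's.

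Where you diverge is in how the lift is produced. The paper does not invoke covering-space path-lifting; instead it builds the lift by hand via an Euler-type discretization. Given a curve $\Gamma$ in $\mathcal N$, it partitions $[0,1]$, at each node $t_j$ pulls back the velocity $\dot\Gamma(t_j)$ through $U^*$ to a tangent vector $\mathbf v_{t_j}$ at the already-constructed point $\gamma(t_j)$, extends $\gamma$ linearly on $[t_j,t_{j+1}]$ with that velocity, and then passes to the limit $l\to\infty$ to recover the length $k$. Your approach instead appeals to (or reproduces) the standard evenly-covered-neighborhood lifting argument, after noting that a norm-preserving smooth map between equidimensional manifolds is a local diffeomorphism. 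What you gain is rigor: the paper's limit argument is heuristic about why the endpoint of the discretized $\gamma$ actually lands in $[y]$ and why the Riemann sum converges to the length. What the paper's version gains is concreteness: it never names ``covering map'' and avoids the verification you correctly flag as the one nontrivial hypothesis. Either route is acceptable here; yours is the cleaner one mathematically, and your caveat about checking the covering property for the specific $U$ of equation~(\ref{eqn:decompfirst}) is well placed.
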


\begin{figure}[htp]
\centering
\includegraphics[width=0.4\textwidth]{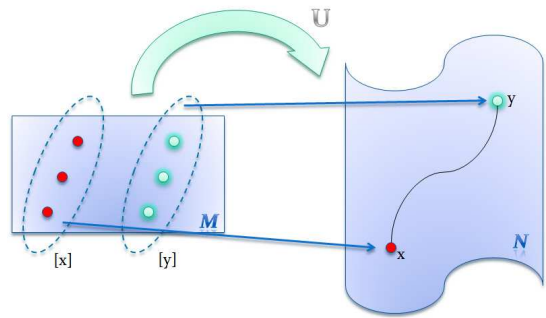}
\caption{(Color online)If $[\mathbf{x}]$ and $[\mathbf{y}]$ in $M$ are the pre-images of $\mathbf{x}$ and $\mathbf{y}$ in $N$, then $d_M([x],[y]) = d_N(x,y)$ provided the mapping $U: M \rightarrow N$ is smooth.}\label{fig:erptsqfit}
\end{figure}

 \begin{proof} Suppose $d_M([x],[y]) = k$, then there exists a curve $\gamma \subset \mathcal{M}$ that connects some $\mathbf{p} \in [x]$ and $\mathbf{q} \in [y]$ of length $k$. Clearly its image  $\Gamma(t) = U(\gamma(t))$ is a curve from $\mathbf{x}$ to $\mathbf{y}$ of length $k$ in $\mathcal{N}$. Thus $d_N(\mathbf{x},\mathbf{y}) \leq d_M\left([x],[y]\right)$.

Now suppose $d_N(\mathbf{x},\mathbf{y}) = k$, such that there exists a curve $\Gamma(t) \in \mathcal{N}$, $\Gamma(0) = \mathbf{x}$, $\Gamma(1) = \mathbf{y}$ of length $k$. Given any $\mathbf{p} \in [x]$, we show that there exists a $\mathbf{q} \in [y]$ such that $d_N(\mathbf{p},\mathbf{q}) \leq k$ by constructing a curve $\gamma$ from $\mathbf{p}$ to $\mathbf{q}$. Let $l$ be a large integer and $t_j = jk/l$, $j = 0, \ldots, m$, set
\begin{align}
\gamma(t) = \mathbf{v}_{t_j}t + \gamma\left(t_j\right) \qquad  t_j  \leq t < t_{j+1}
\end{align}
where $m$ is a large integer, and $\mathbf{v}_{t_j}$ satisfies
\begin{equation}
U^* \mathbf{v}_{t_j} = \frac{d\Gamma}{dt}(t_j)
\end{equation}
Clearly in the limit $n \rightarrow \infty$:
\begin{equation}
\mathrm{Length}(\gamma(t)) = \lim_{l \rightarrow \infty} \sum_j C_M \left(\gamma\left(t_j\right), \mathbf{v}_{t_j} \right) = k
\end{equation}
Hence $d_N(\mathbf{x},\mathbf{y}) \geq \min_{\mathbf{q} \in [x]} d_M(\mathbf{p},\mathbf{q})$. Combining the two results gives the desired equivalence. Symmetry implies $d_N(\mathbf{x},\mathbf{y}) \geq \min_{\mathbf{p} \in [y]} d_M(\mathbf{p},\mathbf{q})$, which establishes the desired result. \end{proof} \qed

To compute distances on $SU(2^n)$, we define an Euclidean manifold $\mathcal{M} = \mathbb{R}^{4^n -1}$. Denote its coordinates by $\mathbf{q} = (q_1,\ldots,q_{4^n-1})$ and tangent vectors by $\mathbf{v}$. We wish to find a suitable metric $C_M(\mathbf{q},\mathbf{\mathbf{v}}) = \sqrt{\ip{\mathbf{\mathbf{v}}}{G \mathbf{\mathbf{v}}}}$, together with a distance preserving map $U$ such that Lemma \ref{lem:distance_equiv} is applicable. There are many possible choices, of which we ideally select one where $G$ has a simple form.


%

The Cartan decomposition is one such candidate \cite{Khaneja01b}. Let $\mathfrak{z}$ be a maximally commuting subspace of $\mathfrak{p}$, then any unitary can be decomposed into
\begin{equation}\label{eqn:decompfirst}
U(H_1,H_2,H_3) = e^{-iH_1} e^{-i H_2} e^{-iH_3},
\end{equation}
where $H_1,H_3 \in \mathfrak{l}$ and $H_2\in \mathfrak{z}$. The vector of matrices $(H_1, H_2, H_3)$ completely specify $U$. We view this as a cartesian plane, $(\mathbf{q_1},\mathbf{q_2}, \mathbf{q_3}) \in \mathcal{M}$, where $\mathbf{q_i}$ is the vectorization of $H_i$ with respect to some orthonormal basis $B_{j,i}$, i.e. $H_i = \sum_j q_{j,i} B_{j,i}$ and $q_{j,i} = \mathrm{Tr}[H_iB_{j,i}]$. (\ref{eqn:decompfirst}) then defines the desired coordinate map.

The second step is to compute the metric $C_M$ on $\mathcal{M}$ such that $C_M(q,\mathbf{v}) = C_N(U(q),U^*(\mathbf{v}))$.  The matrix $G$ can be represented by a $3 \times 3$ matrix of superoperators, $G_{i,j}$, such that a perturbation $(\Delta H_1,\Delta H_2, \Delta H_3)$ has length $\Delta \sum_{i,j} \sqrt{\ip{H_i}{G_{ij} H_j}}$. The properties of $G$ can be characterized:

\begin{lemma}\label{lemma:local_metric}Let $\mathcal{M} = \mathbb{R}^{4^ n-1}$ be a Riemmannian manifold with metric $C_M(\Delta H_1, \Delta H_2, \Delta H_3) = \Delta \sum \sqrt{\ip{H_i}{G_{i,j}H_j}}$, and $U: \mathcal{M} \rightarrow SU(2^n)$ be as defined by (\ref{eqn:decompfirst}). If $C_M(q,\mathbf{v}) = C_N(U(q),U^*(\mathbf{v}))$, then $G$ has the form
\begin{equation}
G =  \left( \begin{array}{ccc}
\epsilon \mathrm{BCH}^\dag_L \mathrm{BCH}_L &  &  \\
 &  \textbf{I} &  \\
 & &  \epsilon \mathbf{A}(H_2) + \mathbf{B}(H_2))
\end{array}\right).
\end{equation}
$\mathbf{A}(H_2)$ and $\mathbf{B}(H_2)$ are $H_2$ dependent operators that satisfy $\mathbf{A}(0) = I$, $\mathbf{B}(0) = 0$, $(I + \mathbf{A}(H_2)) > 0$ and $\mathbf{B}(H_2) > 0$; and $\mathrm{BCH}$ denotes the Baker-Campbell-Hausdorff operator, which satisfies $\exp \{-i(C + D)\} = \exp\{-i \mathrm{BCH}_C(D)\}\exp\{-i D\} + O(|D|^2)$.
\end{lemma}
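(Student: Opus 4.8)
The plan is to compute the pushforward $U^*$ directly from the product form (\ref{eqn:decompfirst}) and then match the pulled-back cost against $\ip{\mathbf v}{G\mathbf v}$ block by block. A tangent vector $\mathbf v=(\Delta H_1,\Delta H_2,\Delta H_3)$ at $\mathbf q$ is sent by $U$ to the velocity at $s=0$ of the curve $s\mapsto e^{-i(H_1+s\Delta H_1)}e^{-i(H_2+s\Delta H_2)}e^{-i(H_3+s\Delta H_3)}$. Using $\frac{d}{ds}e^{-i(H+s\Delta H)}\big|_{s=0}=-i\big(\mathcal{F}(H)\Delta H\big)e^{-iH}$ with $\mathcal{F}(H)\equiv\int_0^1\mathrm{Ad}_{e^{-i\tau H}}\,d\tau$, the generator $H^{\mathrm{tot}}=i\,\dot UU^{-1}$ of the image vector $U^*\mathbf v=\dot U$ comes out as $H^{\mathrm{tot}}=A_1+A_2+A_3$ with $A_1=\mathcal{F}(H_1)\Delta H_1$, $A_2=\mathrm{Ad}_{e^{-iH_1}}\mathcal{F}(H_2)\Delta H_2$, and $A_3=\mathrm{Ad}_{e^{-iH_1}e^{-iH_2}}\mathcal{F}(H_3)\Delta H_3$. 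Since $C_N(U,U^*\mathbf v)^2=\ip{H^{\mathrm{tot}}}{\mathcal{\tilde{G}}H^{\mathrm{tot}}}=\epsilon\|P_\mathfrak{l}H^{\mathrm{tot}}\|^2+\|P_\mathfrak{p}H^{\mathrm{tot}}\|^2$, the lemma reduces to projecting $A_1,A_2,A_3$ onto $\mathfrak{l}$ and $\mathfrak{p}$ and reading off the quadratic form in $\mathbf v$.

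The second step feeds in the commutation relations (\ref{eqn:com_relation}). Since $\mathfrak{l}$ is a subalgebra, $e^{-iH_1}\in L=\exp\mathfrak{l}$, so $\mathrm{Ad}_{e^{-iH_1}}$ preserves both $\mathfrak{l}$ and $\mathfrak{p}$ and is a trace-form isometry, while $\mathcal{F}(H_1)$ preserves $\mathfrak{l}$; thus $A_1\in\mathfrak{l}$ and, calling $\mathcal{F}(H_1)|_{\mathfrak{l}}$ the BCH operator $\mathrm{BCH}_L$, it contributes the block $G_{11}=\epsilon\,\mathrm{BCH}_L^\dagger\mathrm{BCH}_L$. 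Because $\mathfrak{z}$ is abelian, $\mathcal{F}(H_2)\Delta H_2=\Delta H_2$, so $A_2=\mathrm{Ad}_{e^{-iH_1}}\Delta H_2\in\mathfrak{p}$, has no $\mathfrak{l}$-component, and contributes exactly $\|\Delta H_2\|^2$: this gives $G_{22}=\mathbf{I}$ and already forces $G_{12}=0$. Finally $\mathrm{ad}_{H_2}$ maps $\mathfrak{l}\to\mathfrak{p}\to\mathfrak{l}$, so $\mathrm{Ad}_{e^{-iH_2}}=C_{H_2}+S_{H_2}$ with $C_{H_2}$ grading-preserving and $S_{H_2}$ grading-exchanging, $C_0=\mathbf{I}$, $S_0=0$; applied to the $\mathfrak{l}$-element $\mathcal{F}(H_3)\Delta H_3$, $A_3$ acquires an $\mathfrak{l}$-piece through $C_{H_2}$ and a $\mathfrak{p}$-piece through $S_{H_2}$, and collecting the two norms gives a block of the form $\epsilon\mathbf{A}(H_2)+\mathbf{B}(H_2)$ (the residual $H_3$-dependence entering as conjugation by $\mathcal{F}(H_3)$, to be absorbed below), with $\mathbf{A}(0)=\mathbf{I}$, $\mathbf{B}(0)=0$ reflecting $C_0=\mathbf{I}$, $S_0=0$, and with $\mathbf{I}+\mathbf{A}(H_2)$ and $\mathbf{B}(H_2)$ positive since each is a Gram operator $Q^\dagger Q$.

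The real work is this projection bookkeeping and, above all, showing the $\mathfrak{l}$--$\mathfrak{p}$ cross terms vanish. The coupling of $\Delta H_2\in\mathfrak{z}$ to $\Delta H_3$ is, after cancelling the $\mathrm{Ad}_{e^{-iH_1}}$ factors by the isometry property, $\ip{\Delta H_2}{S_{H_2}\mathcal{F}(H_3)\Delta H_3}$, which vanishes because $S_{H_2}$ carries $\mathfrak{l}$ into $[\mathfrak{z},\mathfrak{l}]$ and $[\mathfrak{z},\mathfrak{l}]\perp\mathfrak{z}$ --- indeed $\ip{[Z_1,L]}{Z_2}=\ip{L}{[Z_1,Z_2]}=0$ for $Z_1,Z_2\in\mathfrak{z}$, $L\in\mathfrak{l}$, by the invariance $\ip{[A,B]}{C}=\ip{A}{[B,C]}$ of the trace form and commutativity of $\mathfrak{z}$; the $\Delta H_1$--$\Delta H_2$ coupling is even more immediate since $A_1$ has no $\mathfrak{p}$-component. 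The one coupling that does not vanish pointwise is $\Delta H_1$--$\Delta H_3$, but it lives entirely in the $\mathfrak{l}$-sector, which is of order $\epsilon$: in the regime $\epsilon\to0$ relevant to the distance computation this sector is subdominant, and exploiting the non-injectivity of $U$ to reparametrize the two $\mathfrak{l}$-factors brings it (together with the $\mathcal{F}(H_3)$-conjugation noted above) to the stated diagonal form. Pinning down that last reduction cleanly, while carefully tracking the nested $\mathrm{Ad}/\mathcal{F}$ operators, is where I expect the main difficulty to lie.
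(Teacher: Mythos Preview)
Your approach is essentially the paper's: compute the pushforward by differentiating each exponential factor in the Cartan product, then use the commutation relations (\ref{eqn:com_relation}) to sort the generator into $\mathfrak{l}$- and $\mathfrak{p}$-components and read off the blocks. The paper carries out the three perturbations one at a time in its appendix with the $\mathrm{BCH}$ operator, while you package them via $\mathcal{F}(H)=\int_0^1\mathrm{Ad}_{e^{-i\tau H}}d\tau$ and $\mathrm{Ad}$, but the computations of $G_{11},G_{22},G_{33}$ are the same.

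Where you go beyond the paper is in the treatment of the off-diagonal blocks. The paper dispatches all of them with the single sentence ``from $e^{\Delta A}e^{\Delta B}=e^{\Delta A+\Delta B}+O(\Delta^2)$, $G_{ij}=0$ for $i\neq j$''; but that identity only shows $H^{\mathrm{tot}}$ is linear in the perturbations, not that the three summands are $\tilde{\mathcal G}$-orthogonal. You actually supply arguments: $G_{12}=0$ because $A_1\in\mathfrak{l}$ and $A_2\in\mathfrak{p}$, and $G_{23}=0$ by your $[\mathfrak{z},\mathfrak{l}]\perp\mathfrak{z}$ observation, which is correct and not in the paper. You also correctly flag that $G_{13}$ is not identically zero but only $O(\epsilon)$, since both $A_1$ and $C_{H_2}\mathcal{F}(H_3)\Delta H_3$ sit in $\mathfrak{l}$ --- the paper's block-diagonal statement is, strictly, only true modulo this $O(\epsilon)$ term. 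Your proposed resolution (absorb it into the $\epsilon\to0$ regime that Lemma~\ref{prep:triangle} actually uses, or exploit the non-uniqueness of the $(L,M)$ pair) is the right spirit, though neither you nor the paper makes that last reduction fully rigorous.
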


\begin{proof} From the BCH equation, $e^{\Delta A} e^{\Delta B} = e^{\Delta A +\Delta B} + O(\Delta^2)$, thus $\mathcal{G}_{ij} = 0$ for $i \neq j$. The remaining components can be computed by considering individual perturbations. The details are purely technical, and are left to the appendix.\qed
\end{proof}

To simplify the notation, we set $L = H_1$, $Z = H_2$ and $M = H_3$, so that a point on $\mathcal{M}$ is denoted by the 3-tuple $(L,Z,M)$. While the explicit forms of $\mathbf{A}$ and $\mathbf{B}$ are complex, the metric greatly simplifies when $\epsilon \rightarrow 0$.

\begin{lemma}\label{prep:triangle}In the limit $\epsilon \rightarrow 0$, the cost of synthesizing a unitary $U$ is given by
\begin{equation}
D(I,U) = \min_{Z: U(L,Z,M) = U} |Z|
\end{equation}
where $|Z| = \sqrt{\mathrm{Tr}(Z^2)}$ is the trace norm of $Z$ and $U$ is as defined in (\ref{eqn:decompfirst}).
\end{lemma}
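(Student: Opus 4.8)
The plan is to sandwich $D(I,U)$ between $\min_Z|Z|$ from both sides, using only Lemmas~\ref{lem:distance_equiv} and~\ref{lemma:local_metric}, and to check that both bounds survive the limit $\epsilon\to0$. First I would apply Lemma~\ref{lem:distance_equiv} with the basepoint $\mathbf{p}=(0,0,0)$, which lies in the preimage $[I]$ because $U(0,0,0)=I$. This rewrites the distance on $\mathcal{N}=SU(2^n)$ as a minimum over the preimage of $U$ in $\mathcal{M}$:
\[
D(I,U)=d_N(I,U)=\min\big\{\,d_M\big((0,0,0),(L,Z,M)\big)\ :\ U(L,Z,M)=U\,\big\}.
\]
It then suffices to show $d_M\big((0,0,0),(L,Z,M)\big)=|Z|+o(1)$ as $\epsilon\to0$, and minimize over admissible $(L,Z,M)$.

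\emph{Lower bound.} Let $(L(t),Z(t),M(t))$, $t\in[0,T]$, be any piecewise-smooth curve in $\mathcal{M}$ from $(0,0,0)$ to $(L,Z,M)$. By Lemma~\ref{lemma:local_metric} the metric $G$ is block diagonal with $G_{2,2}=\mathbf{I}$, while $G_{1,1}$ and $G_{3,3}$ are positive semidefinite (being diagonal blocks of the positive-definite metric $G$). Hence, pointwise,
\[
\ip{\dot L}{G_{1,1}\dot L}+\ip{\dot Z}{\dot Z}+\ip{\dot M}{G_{3,3}\dot M}\ \ge\ \ip{\dot Z}{\dot Z},
\]
so the length of the curve is at least $\int_0^T|\dot Z|\,dt\ge|Z(T)-Z(0)|=|Z|$. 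Therefore $d_M\big((0,0,0),(L,Z,M)\big)\ge|Z|$, and consequently $D(I,U)\ge\min_Z|Z|$ for every $\epsilon>0$.

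\emph{Upper bound.} I would fix a decomposition $U=e^{-iL^*}e^{-iZ^*}e^{-iM^*}$ attaining $|Z^*|=\min_Z|Z|$ (such a decomposition exists by~(\ref{eqn:decompfirst}); the minimum is attained), and connect $(0,0,0)$ to $(L^*,Z^*,M^*)$ by three straight segments, crucially ordered as
\[
(0,0,0)\ \longrightarrow\ (L^*,0,0)\ \longrightarrow\ (L^*,0,M^*)\ \longrightarrow\ (L^*,Z^*,M^*).
\]
On the first segment the metric is $G_{1,1}=\epsilon\,\mathrm{BCH}^\dagger_L\mathrm{BCH}_L$, i.e.\ $\epsilon$ times an operator that is smooth (hence bounded) along the segment, so its cost is $O(\sqrt{\epsilon})$. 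On the second segment $Z\equiv0$, so $G_{3,3}=\epsilon\mathbf{A}(0)+\mathbf{B}(0)=\epsilon I$ and the cost is $\sqrt{\epsilon}\,|M^*|$. On the third segment only $Z$ varies, $G_{2,2}=\mathbf{I}$, and the cost is exactly $|Z^*|$. Summing, $d_M\big((0,0,0),(L^*,Z^*,M^*)\big)\le|Z^*|+O(\sqrt{\epsilon})$, hence $\limsup_{\epsilon\to0}D(I,U)\le\min_Z|Z|$; combining this with the lower bound and letting $\epsilon\to0$ gives the lemma.

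All of the analytic content is already carried by Lemmas~\ref{lem:distance_equiv} and~\ref{lemma:local_metric}, so I expect the one genuinely delicate point to be the ordering of the comparison path. The expensive $M$-direction cost $\mathbf{B}(Z)$ is $\Theta(1)$ as soon as $Z\neq0$ (only $\mathbf{B}(0)=0$), so the $M$-motion must be performed while $Z$ is still zero; a naive ordering would leave a residual $\Theta(1)$ cost and break the upper bound. A minor secondary check is that the lower bound is uniform in $\epsilon$, which holds because $G_{2,2}=\mathbf{I}$ has no $\epsilon$-dependence and therefore passes to the limit.
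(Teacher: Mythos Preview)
Your proof is correct and follows essentially the same strategy as the paper: both obtain the upper bound by routing through the $Z=0$ hyperplane (the paper via the single intermediate vertex $\mathbf{p}=(L,0,M)$ and a triangle inequality, you via the three-segment path), and both invoke Lemma~\ref{lem:distance_equiv} to transfer the problem to $\mathcal{M}$. Your lower bound via the block-diagonal projection $\langle v,Gv\rangle\ge|\dot Z|^2$ is a mild variant of the paper's second triangle inequality, with the pleasant feature that it is uniform in $\epsilon$ rather than only holding in the limit.
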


\begin{figure}[htp]
\centering
\includegraphics[width=0.4\textwidth]{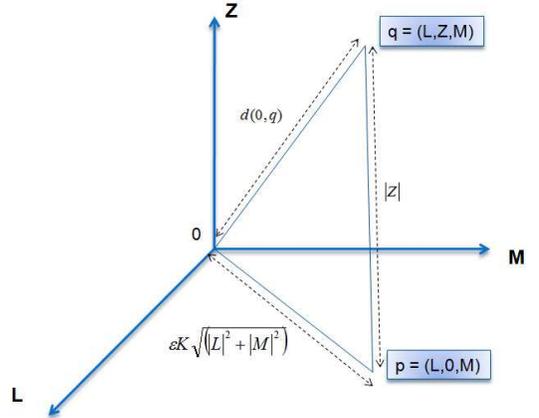}
\caption{(Color online)$d(\mathbf{p},\mathbf{q}) = |Z|$ and $d(\mathbf{0},\mathbf{q}) \rightarrow 0$ as $\epsilon \rightarrow 0$. Thus $d(\mathbf{0},\mathbf{p}) \rightarrow |Z|$ as $\epsilon$ approaches $0$. \label{fig:triangle}}
\end{figure}

\begin{proof} Consider the triangle with vertices $\mathbf{0} = (0,0,0)$, $\mathbf{p} = (L,0,M)$ and $\mathbf{q} = (L,Z,M)$ (Fig \ref{fig:triangle}). Let $C$ be a constant such the operator norm $|\mathrm{BCL}_L| < C$ and $K = \max (C,1)$. The length of straight line from the origin to $\mathbf{p}$ is bounded above by $\epsilon K \sqrt{(|L|^2 + |M|^2)}$, thus so is $d(\mathbf{0},\mathbf{p})$, the distance from the origin to $\mathbf{p}$. Two triangle inequalities then bound $d(\mathbf{0},\mathbf{q})$ from above and below:
\begin{align}\nonumber
d(\mathbf{0},\mathbf{q}) \leq d(\mathbf{p},\mathbf{q}) + d(\mathbf{0},\mathbf{p})  & \leq |Z| + \epsilon K \sqrt{(|L|^2 + |M|^2)}\\ \nonumber
d(\mathbf{0},\mathbf{q}) \geq d(\mathbf{p},\mathbf{q}) - d(\mathbf{0},\mathbf{p}) &\geq |Z| - \epsilon K \sqrt{(|L|^2 + |M|^2)}
\end{align}
In the limit $\epsilon \rightarrow 0$, $d(\mathbf{0},\mathbf{q}) = |Z|$. Application of Lemma \ref{lem:distance_equiv} gives the required result.
\end{proof} \qed

We now have an explicit characterization of distances on the coordinate manifold $M$. The final step is to determine the pre-image of a given unitary in $SU(2^n)$. We use a variation of the technique developed in \cite{Childs03d}. Let the Cartan decomposition of $U$ be as in (\ref{eqn:decompfirst}), the properties of Cartan decompositions allow us to choose a basis such that the matrix representation of $U$ can be expressed as
\begin{equation}\label{eqn:matrix_decomp}
U = A D B^T,
\end{equation}
where $A = e^{iL}$, $B^T = e^{iM}$ are orthogonal, and $D = e^{iZ}$ is diagonal \cite{Helgason01a}. $D^2$ is the diagonalisation of $U^TU$ and is hence unique up to permutation of its diagonal elements. We use this decomposition to find an explicit expression for $D(I,U)$:

\begin{theorem}\label{thrm:endresult}Consider the $n$-qubit Cartan control problem. The minimal cost required to synthesize a unitary $U$ with Cartan decomposition $e^{iL}e^{iZ}e^{iM}$ is
\begin{equation}
D(I,U) = \sqrt{\min_{y \in \mathcal{L}} |\mathbf{eig}(Z) - \mathbf{y}|^2}
\end{equation}
where $\mathcal{L}$ is a lattice defined by the set of points:
\begin{equation}
\mathcal{L} = \{(m_1, m_2, \ldots m_{2^n})\pi: \,\, \sum m_k = 0, \,\, m_k \in \mathbb{Z}\}
\end{equation}
\end{theorem}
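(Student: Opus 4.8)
The plan is to combine Lemma~\ref{prep:triangle} with the matrix decomposition (\ref{eqn:matrix_decomp}). In the limit $\epsilon\to0$, Lemma~\ref{prep:triangle} already gives $D(I,U)=\min\{|Z'| : U(L,Z',M)=U \text{ for some } L,M\in\mathfrak{l}\}$, and since $\mathfrak{z}$ is the space of diagonal matrices in the basis used for (\ref{eqn:matrix_decomp}), every admissible $Z'$ is diagonal, so $|Z'|=\sqrt{\mathrm{Tr}(Z'^{2})}=|\mathbf{eig}(Z')|$. It therefore suffices to describe the set of attainable eigenvalue vectors $\mathbf{eig}(Z')$ and to minimise its Euclidean norm; I will show that this set equals $\{\sigma(\mathbf{eig}(Z))+\mathbf{y}:\sigma\text{ a permutation},\ \mathbf{y}\in\mathcal{L}\}$ up to which permutations $\sigma$ actually need be realised, and that its minimum-norm element has squared norm $\min_{\mathbf{y}\in\mathcal{L}}|\mathbf{eig}(Z)-\mathbf{y}|^{2}$.

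For the lower bound on $D(I,U)$ I would bound the attainable set from outside. Any decomposition of $U$ in the form (\ref{eqn:matrix_decomp}), $U=A'D'B'^{T}$ with $A',B'$ orthogonal and $D'=e^{iZ'}$ diagonal, satisfies $U^{T}U=B'D'^{2}B'^{T}$ because $A'$ is orthogonal; hence the multiset of diagonal entries of $D'^{2}=e^{2iZ'}$ is exactly the spectrum of $U^{T}U$, and so is the same for every decomposition. Comparing with the given decomposition $U=e^{iL}e^{iZ}e^{iM}$ forces $e^{2i\lambda'_{k}}=e^{2i\lambda_{\sigma(k)}}$ for some permutation $\sigma$, i.e.\ $\lambda'_{k}=\lambda_{\sigma(k)}+\pi m_{k}$ with $m_{k}\in\mathbb{Z}$; tracelessness of $Z$ and $Z'$ then gives $\sum_{k}m_{k}=0$, so $\mathbf{eig}(Z')\in\{\sigma(\mathbf{eig}(Z))+\mathbf{y}:\mathbf{y}\in\mathcal{L}\}$. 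Since $|\sigma(\mathbf{v})+\mathbf{y}|=|\mathbf{v}+\sigma^{-1}(\mathbf{y})|$ and $\sigma^{-1}(\mathcal{L})=\mathcal{L}$, the permutation is irrelevant to the minimisation and $D(I,U)\ge\sqrt{\min_{\mathbf{y}\in\mathcal{L}}|\mathbf{eig}(Z)-\mathbf{y}|^{2}}$.

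For the matching upper bound I would exhibit, for each $\mathbf{y}=\pi\mathbf{m}\in\mathcal{L}$, a decomposition with $\mathbf{eig}(Z')=\mathbf{eig}(Z)+\mathbf{y}$. Put $Z'=Z+\mathrm{diag}(\mathbf{y})$, which is diagonal and traceless since $\sum_{k}m_{k}=0$, and $S=e^{i\,\mathrm{diag}(\mathbf{y})}=\mathrm{diag}((-1)^{m_{1}},\dots,(-1)^{m_{2^{n}}})$, so that $e^{iZ'}=e^{iZ}S$ and hence $U=e^{iL}e^{iZ'}(S\,e^{iM})$. This is again of the form $U(L,Z',M')$ as long as $S\,e^{iM}$ lies in the connected group generated by $\mathfrak{l}$. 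The one structural input is that in the basis of (\ref{eqn:matrix_decomp}) the relevant Cartan decomposition is the one (type AI) for which $K=\{e^{iM}:M\in\mathfrak{l}\}$ is exactly $SO(2^{n})$ — this is precisely what makes (\ref{eqn:matrix_decomp}) available — and $K$, being compact and connected, coincides with the $\exp$-image of $\mathfrak{l}$; since $\det S=(-1)^{\sum_{k}m_{k}}=1$ we have $S\in SO(2^{n})=K$, and as $e^{iM}\in K$ as well, so is $S\,e^{iM}$. Hence $D(I,U)\le\sqrt{\min_{\mathbf{y}\in\mathcal{L}}|\mathbf{eig}(Z)-\mathbf{y}|^{2}}$, which together with the lower bound proves the theorem.

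I expect the crux to be pinning down the lattice $\mathcal{L}$ precisely. The spectral argument only shows $Z$ is determined modulo permutations and $\pi$-shifts, so one must use tracelessness to cut the shifts down from $\pi\mathbb{Z}^{2^{n}}$ to $\mathcal{L}$ (the lower bound) and, conversely, verify that \emph{every} element of $\mathcal{L}$ is attained, which rests on the sign-flip matrices with $\det=+1$ sitting inside the connected subgroup $SO(2^{n})=\{e^{iM}:M\in\mathfrak{l}\}$ so that they can be absorbed into the orthogonal factors (the upper bound). Observing that the permutation freedom, though genuinely present, does not affect the Euclidean minimisation is what allows the final answer to be stated purely in terms of $\mathcal{L}$.
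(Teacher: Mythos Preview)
Your proof is correct and follows essentially the same route as the paper: reduce via Lemma~\ref{prep:triangle} to minimising $|Z'|$ over all Cartan decompositions, identify $|Z'|$ with the Euclidean norm of $\mathbf{eig}(Z')$ (the paper packages this as an isometry $\mathcal{A}$ between the eigenvalue vector and the coefficient vector in an orthonormal basis of $\mathfrak{z}$), and then use the uniqueness of $D^{2}$ as the diagonalisation of $U^{T}U$ to pin down the attainable eigenvalue vectors as $\mathbf{eig}(Z)+\mathcal{L}$ modulo permutations. Your explicit upper-bound construction---absorbing the sign matrix $S\in SO(2^{n})$ into the orthogonal factor---and your derivation of the constraint $\sum_{k}m_{k}=0$ from tracelessness supply details the paper leaves implicit.
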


\begin{proof}Since $D$ is diagonal, we can describe it by a vector $\mathbf{x} = (x_1,\ldots,x_{2^n})$ such that the diagonal elements of $D$ take the form $e^{i x_k}$. In addition, we know $D = e^{iZ}$ for some $Z = \sum_j z_j B_j \in \mathfrak{z}$, where $B_j$, $j = 1,\ldots 2^n-1$ is an orthonormal basis for $\mathfrak{z}$.

Let $\mathcal{A}$ be the mapping that takes the vector $\mathbf{x}$ to $\mathbf{z}$, the vector representation of $Z$ in the $B_j$ basis, i.e:
\begin{equation}
\mathbf{eig}\left\{[\mathcal{A}\mathbf{x}]^k B_k\right\} = \{x_1,x_2,\ldots, x_{2^n}\}
\end{equation}
To see that $\mathcal{A}$ is an isometry, i.e., $|\mathcal{A}\mathbf{x}|^2 = |\mathbf{x}|^2$, we note that in our particular representation (\ref{eqn:matrix_decomp}), $Z$ is diagonal, and thus $\mathbf{eig}\left\{[\mathcal{A}\mathbf{x}]^k B_k\right\} =  \mathbf{diag}\left\{[\mathcal{A}\mathbf{x}]^k B_k\right\}$.

Define $\overrightarrow{B}_k = \mathbf{diag}(B_k)$ as the vector formed from the diagonal elements of $B_k$, then $[A\mathbf{x}]^k \overrightarrow{B}_k = \mathbf{x}$. Let $\mathcal{B} = [\overrightarrow{B}_1,\overrightarrow{B}_2,\ldots,\overrightarrow{B}_k]$ be the matrix whose columns are the elements of $B_k$, then the equation can be rewritten as $\mathcal{B} \mathcal{A} \mathbf{x}   = \mathbf{x}$. Since $\mathcal{B}$ is orthonormal, $\mathcal{A} = \mathcal{B}^{-1}$ must be also, and hence preserve the Euclidean norm. So
\begin{align}\nonumber
D(I,U) & = \min \{|Z|: \, e^{L}e^{Z}e^{M} = U\},\\
& = \min \{|\mathbf{x}|: \, e^{L}e^{[\mathcal{A}\mathbf{x}]^k \sigma_k}e^{M} = U\}.
\end{align}
Since permutations preserve the Euclidean norm, the only freedom in $x_k$ that we need to minimize over is addition by multiples of $\pi$. Thus given one particular decomposition of a given unitary $U$, $\mathbf{eig}(H_2)$ gives one possible $\mathbf{x}$. The set of all vectors (permutations excluded) that generate $U$ is given by $\{\textbf{x} + \mathbf{l}: \,  \mathbf{l} \in \mathcal{L} \}$. The result follows. \qed
\end{proof}

Given a unitary $U \in SU(2^n)$, the above theorem offers a systematic method to solve for the minimal cost required to synthesize $D(I,U)$.

\section{The Single Qubit Control Problem}
In this section, we illustrate our result by applying it to the special case of single qubit optimal control. We wish to synthesize a particular spin $\frac{1}{2}$ interaction $U \in SU(2)$. Application of magnetic fields in one particular direction (say $x$) incur negligible cost, while all orthogonal directions require unit cost, i.e.
$\mathfrak{l} = \mathrm{Span}(\sigma_x)$ and $\mathfrak{p} = \mathrm{Span}(\sigma_{z,y})$.

This problem is a slight variation of the single qubit time-optimal control problem solved in \cite{Khaneja01a}. More precisely, it corresponds to the case of a system that evolves under a constant magnetic field described by the Hamiltonian $H_d = \sqrt{2} \sigma_z$. We wish to synthesize a unitary $U$ in minimal time, given the ability synthesize magnetic fields in the $x$ direction of arbitrary strength, or reverse the direction of $H_d$.

\begin{prep}Let $U \in SU(2)$. Suppose we are given one particular decomposition
\begin{equation}\label{eqn:decomp}
U = \exp(-i x \sigma_x) \exp(-i z \sigma_z) \exp(-i y \sigma_x),
\end{equation}
where $\sigma_x$, $\sigma_z$ denote standard Pauli matrices, then $D(I,U) = \frac{1}{\sqrt{2}}\min_{m\in \mathbb{Z}}\{z - 2 m\pi\}$. In particular, $D(I,U) = \frac{|z|}{\sqrt{2}}$ for $z \in [-\pi,\pi]$.
\end{prep}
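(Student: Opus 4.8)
The plan is to derive the Proposition as a direct instance of Theorem~\ref{thrm:endresult}, specializing to $n=1$. First I would identify the ingredients of the $n$-qubit formula in the single-qubit setting. Here $\mathfrak{p} = \mathrm{Span}(\sigma_y,\sigma_z)$, and a maximally commuting subspace $\mathfrak{z} \subseteq \mathfrak{p}$ is one-dimensional; the decomposition $U = \exp(-ix\sigma_x)\exp(-iz\sigma_z)\exp(-iy\sigma_x)$ in~(\ref{eqn:decomp}) shows we may take $\mathfrak{z} = \mathrm{Span}(\sigma_z)$, so $Z = z\sigma_z$. The eigenvalues of $Z$ are $\mathbf{eig}(Z) = (z,-z)$, which already lie in the hyperplane $\sum_k m_k = 0$ that defines the lattice $\mathcal{L}$, consistent with the requirement that $Z$ be traceless.

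Next I would evaluate the lattice and the minimization. For $n=1$ the lattice is $\mathcal{L} = \{(m_1,m_2)\pi : m_1 + m_2 = 0,\ m_k \in \mathbb{Z}\} = \{(m,-m)\pi : m \in \mathbb{Z}\}$. Theorem~\ref{thrm:endresult} then gives
\begin{equation}
D(I,U) = \sqrt{\min_{m \in \mathbb{Z}} \bigl[(z - m\pi)^2 + (-z + m\pi)^2\bigr]} = \sqrt{\min_{m \in \mathbb{Z}} 2(z - m\pi)^2}.
\end{equation}
Hence $D(I,U) = \sqrt{2}\,\min_{m \in \mathbb{Z}} |z - m\pi|$. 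I would then reconcile this with the claimed expression $\tfrac{1}{\sqrt{2}}\min_{m \in \mathbb{Z}}\{z - 2m\pi\}$: the two agree up to the normalization of the basis of $\mathfrak{z}$ (the formula $|Z| = \sqrt{\mathrm{Tr}(Z^2)}$ with $Z = z\sigma_z$ gives $|Z| = |z|\sqrt{2}$, whereas writing $Z = z'B$ for a unit-norm $B$ rescales $z' = \sqrt{2}\,z$), and possibly up to the period ($\pi$ versus $2\pi$) depending on whether one works in $SU(2)$ or effectively modulo the global phase / $\exp(-i\pi\sigma_z) = -I$. I would state the normalization convention explicitly so that the constant $\tfrac{1}{\sqrt{2}}$ in the Proposition comes out correctly, and note that for $z \in [-\pi,\pi]$ (or the appropriately scaled interval) the minimizing integer is $m=0$, yielding $D(I,U) = |z|/\sqrt{2}$.

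The main obstacle is bookkeeping the normalization constants rather than any conceptual difficulty: one must be careful about (i) whether $\mathfrak{z}$ is coordinatized by the Pauli matrix $\sigma_z$ (norm $\sqrt{2}$) or by an orthonormal generator, since Theorem~\ref{thrm:endresult} is phrased in terms of $\mathbf{eig}(Z)$ directly while the Proposition's answer is phrased in terms of the coefficient $z$, and (ii) the precise period of the lattice in the chosen parametrization of $SU(2)$. I would resolve this by fixing the convention used in~(\ref{eqn:decomp}), checking the endpoint cases $z=0$ (giving $D=0$, i.e.\ $U$ purely local, as it must) and $z=\pi/2$ against the known single-qubit time-optimal result of~\cite{Khaneja01a}, and then reading off the constant. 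Once the conventions are pinned down, the Proposition is immediate from the Theorem.
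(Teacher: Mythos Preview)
Your approach is exactly the paper's: specialize Theorem~\ref{thrm:endresult} to $n=1$, identify $\mathcal{L}=\{(m\pi,-m\pi):m\in\mathbb{Z}\}$, and minimize. The only difference is that the paper takes the eigenvalues of $z\sigma_z$ to be $\pm z/2$ (a spin-$\tfrac12$ normalization of the Pauli generator), which immediately gives $D(I,U)=\sqrt{2\min_m|z/2-m\pi|^2}=\tfrac{1}{\sqrt{2}}\min_m|z-2m\pi|$ and resolves precisely the factor-of-two bookkeeping you flagged; with that convention fixed, your argument and the paper's are identical.
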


\begin{proof}Note that $z \sigma_z$ has eigenvalues $\pm \frac{z}{2}$, and $\mathcal{L} = \{ (m\pi,-m\pi):\,  m \in \mathbb{Z}\}$. So $D(I,U) = \sqrt{2 \min_{m \in \mathbb{Z}} | z/2 - m \pi|^2 }$. The result follows directly. \qed \end{proof}

In \cite{Khaneja01a}, a slightly different result where $D(I,U) = \frac{|z|}{\sqrt{2}}$ for $z \in [0,2 \pi]$ is obtained. The deviation results from our extra assumption that the direction of $H_d$ can be reversed. The KGB result requires the unique decomposition such that $z \in [0,2 \pi]$, whereas our result applies to any decomposition that satisfies (\ref{eqn:decomp}).

\section{Conclusion}

The geometrical approach provides a useful alternative to more algebraic methods \cite{Bennet02a,Khaneja01a,Childs03d}. In this paper, we have demonstrated how we can use it to characterize the general Cartan control problem. In the single-qubit case, our result solves a slight variation of single-qubit time optimal control, and provides a second perspective to \cite{Khaneja01a}. In the two qubit case, it characterizes the minimal amount of non-local interactions required to synthesize a given interaction.

The general $n$-qubit Cartan control problem that we have described does not have direct physical application, because the class of Hamiltonians assumed to be `easy' to apply is too broad to be realistic. However, our results do show an instance where the geometric formalism can be applied to systems of arbitrary size. By reducing the complex space of unitary operations into a cartesian coordinate system with a suitably appropriate metric, we circumvent much of the technical difficulties in algebraically intensive methods.

The geometrical method outlined can convert any quantum control problem into a minimization of distances between two sets in cartesian space with a suitably defined metric. This allows analytical solutions in special cases, such as the Cartan control problem. Alternatively there exists numerous numerical techniques that have been designed to solve for minimal distances on manifolds. Thus, the geometric formalism is a promising method, both for solving other problems in control theory, and for its applications in quantum complexity.

%

{\bf Acknowledgments---} M.G., A.D. and M.A.N. acknowledge the support of the Australian
Research Council (ARC). M.G. thanks Christian Weedbrook, Mark de Burgh for discussions.


\begin{appendix}

\section{Derivation of the Coordinate Metric $C_M$}
In this section, we provide a detailed outline of how the metric $C_M$ in Lemma $\ref{lemma:local_metric}$ is derived. We consider a point $(L,Z,M)$ on $\mathcal{M}$ and consider perturbations on each of $L$, $Z$ and $M$, which we denote by $P_L$, $P_Z$ and $P_M$ respectively. Firstly
\begin{align}\nonumber
U(L + \Delta P_L, Z, M) & = e^{-i(L + \Delta P_L)} e^{-iZ} e^{-iM} \\  &= \exp [-i\Delta \mathrm{BCH}_L(P_L)] U
\end{align}
Since $[L,P_L] \in \mathfrak{l}$, and $\mathrm{BCH}_L \in \mathfrak{l}$,
\begin{align}
\ip{P_L}{G P_L} &= \epsilon\ip{P_L}{\mathrm{BCH}^\dag_L \mathrm{BCH}_L P_L}\qquad \\  \Rightarrow  \mathcal{G}_{11} & = \epsilon \mathrm{BCH}^\dag_L \mathrm{BCH}_L
\end{align}
Similarly, other components of $G$ can be computed by perturbing $Z$ and $M$:
\begin{align}
U(L , Z + \Delta P_Z, M)  &= e^{-iL} e^{-i(Z + \Delta P_Z)} e^{-iL} \\& = \exp \left[ -i \Delta e^{-iL} P_Z e^{iL} \right]U
\end{align}
Noting that $e^{-iL} P_Z e^{iL} \in \mathfrak{p}$ since $[\mathfrak{l}, \mathfrak{p}] \in \mathfrak{p}$, we have
\begin{equation}
\ip{P_{Z}}{G P_{Z}} = |P_Z|^2 \qquad  \Rightarrow \qquad G_{22} = I
\end{equation}
The final perturbation is slightly more complex:
\begin{align}\nonumber
U(L , Z , M + \Delta P_M) &= e^{-iL} e^{-iZ} e^{-i(M + \Delta P_M)}\\ \nonumber
& = e^{-iL} e^{ -i \Delta e^{-iZ} \mathrm{BCH}_M(P_M) e^{iZ} } e^{-iX_2} \\ \nonumber
& = e^{ -i \Delta e^{-iL} e^{-iZ} \mathrm{BCH}_M(P_M) e^{iZ} e^{-iL} } U\\
& = e^{ -i \Delta \mathcal{J}_{L,Z,M}(P_M)} U
\end{align}
Noting that $\mathrm{BCH}_M(P_M) \in \mathfrak{p}$, and $[\mathfrak{p},\mathfrak{p}] \subset \mathfrak{l}$. We can write:
\begin{equation}
 e^{-iZ} \mathrm{BCH}_M(P_M) e^{iZ} = a(Z)Q_P(P_M) + b(Z)Q_L(P_M) \qquad
\end{equation}
Where $Q_P \in \mathfrak{p}$ and $Q_L \in \mathfrak{l}$, $a^2 + b^2 = 1$ and $b(0) = 0$. The commutation relations $[\mathfrak{l},\mathfrak{l}] \subset \mathfrak{l}$, $[\mathfrak{l},\mathfrak{p}] \subset \mathfrak{p}$ then implies that $\mathcal{J}_{L,Z,M}$ takes the same form, i.e:
\begin{equation}
\mathcal{J}_{L,Z,M}(P_M) = a(Z)Q_P(P_M) + b(Z)Q_L(P_M) \qquad
\end{equation}
Thus
\begin{equation}
\ip{P_{M}}{G P_{M}} = \epsilon a^2(Z) + b^2(Z) \qquad G_{22} = I
\end{equation}
In particular
\begin{equation}
G_{33}\mid_{Z=0} = \epsilon \textbf{A}(Z) + \textbf{B}(Z)
\end{equation}
For some positive definite operators $\textbf{A}$ and $\textbf{B}$ such that $\textbf{A}(0) = \textbf{I}$ and $\textbf{B}(0) = \textbf{0}$.
 Hence, in matrix representation, the global metric takes on a block diagonal form:
\begin{equation}
\mathcal{G} =  \left( \begin{array}{ccc}
\epsilon \mathrm{BCH}^\dag_L \mathrm{BCH}_L &  &  \\
 & \textbf{I} &  \\
 & &  \epsilon A(Z) + B(Z)
\end{array}\right)
\end{equation}
\end{appendix}

\bibliographystyle{apsrev}   


\end{document}